\documentclass[sigconf]{acmart}

\AtBeginDocument{%
  \providecommand\BibTeX{{%
    \normalfont B\kern-0.5em{\scshape i\kern-0.25em b}\kern-0.8em\TeX}}}

\setcopyright{acmcopyright}
\copyrightyear{2018}
\acmYear{2018}
\acmDOI{10.1145/1122445.1122456}

\acmConference[Woodstock '18]{Woodstock '18: ACM Symposium on Neural
  Gaze Detection}{June 03--05, 2018}{Woodstock, NY}
\acmBooktitle{Woodstock '18: ACM Symposium on Neural Gaze Detection,
  June 03--05, 2018, Woodstock, NY}
\acmPrice{15.00}
\acmISBN{978-1-4503-XXXX-X/18/06}


\usepackage{balance} 
\usepackage[linesnumbered,ruled,vlined]{algorithm2e}
\usepackage{amsmath}
\usepackage{makecell}
\usepackage{soul}
\usepackage{enumitem}
\usepackage{caption}
\usepackage{subcaption}
\usepackage{pifont}
\newcommand{\xmark}{\ding{55}}
\newcommand{\cmark}{\ding{51}}
\newtheorem{obs}{Observation}
\usepackage[]{algorithm2e}
\usepackage{footnote}
\makesavenoteenv{tabular}
\makesavenoteenv{table}
\usepackage{amsfonts}
\usepackage{multirow}
\begin{document}

\title{Mechanism Design without Money for Fair Allocations}

\author{Manisha Padala and Sujit Gujar}
\email{manisha.padala@research.iiit.ac.in}
\email{sujit.gujar@iiit.ac.in}
\affiliation{%
  \institution{Machine Learning Lab, International Institute of Technology, Hyderabad}
  \country{India}
}

\renewcommand{\shortauthors}{Padala and Gujar}

\begin{abstract}
Fairness is well studied in the context of resource allocation. Researchers have proposed various fairness notions like envy-freeness (EF), and its relaxations, proportionality and max-min share (MMS). There is vast literature on the existential and computational aspects of such notions. While computing fair allocations, any algorithm assumes agents' truthful reporting of their valuations towards the resources. Whereas in real-world web-based applications for fair division, the agents involved are strategic and may manipulate for individual utility gain. In this paper, we study strategy-proof mechanisms without monetary transfer, which satisfies the various fairness criteria.

We know that for additive valuations, designing truthful mechanisms for EF, MMS and proportionality is impossible. Here we show that there cannot be a truthful mechanism for EFX and the existing algorithms for EF1 are manipulable. We then study the particular case of single-minded agents. For this case, we provide a Serial Dictatorship Mechanism that is strategy-proof and satisfies all the fairness criteria except EF.
\end{abstract}

%
\begin{CCSXML}
<ccs2012>
<concept>
<concept_id>10003752.10010070.10010099.10010101</concept_id>
<concept_desc>Theory of computation~Algorithmic mechanism design</concept_desc>
<concept_significance>500</concept_significance>
</concept>
</ccs2012>
\end{CCSXML}

\ccsdesc[500]{Theory of computation~Algorithmic mechanism design}

\keywords{Fair Allocation, Mechanism Design}


\maketitle

\section{Introduction}
 
Fair division of resources is critical in various situations like division of inheritance or land and allocation of rooms to housemates, jobs to workers, time slots to courses. In a typical scenario, the agents involved report their valuations for the resources available. The central aggregator or the underlying software aggregates these reported valuations to output a fair allocation. Various web-based applications like Spliddit \footnote{www.spliddit.org}, Fair Proposals System \footnote{www.fairproposals.com}, Coursematch \footnote{www.coursematch.io}, Divide Your Rent Fairly \footnote{https://www.nytimes.com/interactive/2014/science/rent-division-calculator.html}, etc offer such solutions readily. Often the participants are strategic and misreport their valuations to improve their utility.  The party that strictly adheres to the protocol and reveals its true valuation (while it can misreport and achieve more utility) may find it unfair if others misreport for their benefit even though the underlying algorithm is fair for the reported types. In auction settings, one prevents such strategic manipulations through monetary transfers. Whereas in resource allocation, no monetary transfers are allowed. Hence, it is essential to look for truthful mechanisms that ensure fairness without payments.

In this paper, we focus on indivisible resources and the fairness notions of \emph{envy freeness} (EF), \emph{proportionality} and \emph{maxi-min share} (MMS).  Proportionality \cite{steihaus1948} is the first concept of fairness ever proposed. It ensures that each agent receives a fair share of its utility. Another popular notion is envy-freeness (EF). An allocation is EF when no pair of agents exist such that one of the agents increases its utility by exchanging their allocated goods \cite{foley1967}. For divisible goods, EF allocations always exist \cite{stromquist1980}, and complete allocation may not exist for indivisible goods. It is also NP-hard to compute an approximation to EF \cite{lipton2004}. When the valuations are sub-additive, EF implies proportionality \cite{bouveret16}. Although proportionality is a weaker notion, its existence is still not guaranteed for indivisible goods. 

Given the above results, in \cite{lipton2004, budish2011}, the authors relax EF and introduce EF up to the most-valued good or EF1. An EF1 allocation is always guaranteed to exist even for indivisible goods and can be computed in polynomial time by the cycle-elimination algorithm \cite{lipton2004}. It is interesting to consider EFX, which is EF up to the least-valued good \cite{caragiannis2019}. It is stronger than EF1. EFX always exists for up to three agents \cite{efx3}. For indivisible goods, another fairness criteria considered is MMS \cite{budish2011}, where each agent's utility is at least its MMS guarantee. The MMS guarantee is the worst-case value an agent receives when partitioning the goods and others choose before it. MMS allocation is guaranteed to exist for up to two agents \cite{procaccia2014}.

The above existential and complexity results assume that each agent's preferences (determined using their valuations for each bundle) are known.  In this work, we are interested in preference elicitation to prevent manipulations. Hence we study the existence of truthful or SP (Strategy-Proof) mechanisms that ensure fairness or \emph{Strategy-Proof Fair} (SPF).
\begin{savenotes}
\begin{table*}[!ht]
    \centering
    \renewcommand{\arraystretch}{1.05}{
    \begin{tabular}{|c|l|c|l|l|}
        \hline
        Property & Single-Minded &  Identical  & \multicolumn{2}{c|}{Additive}  \\
        \cline{4-5}
        &&Additive (NSP)&($n=2$)&($n\ge m$)\\
        \hline
        EF              &    &  & \xmark \cite{lipton2004} & \\
        Proportionality &{\color{blue}\cmark (SD)}\footnote{Only when proportionality exists}    &  & \xmark \cite{amanatidis2016} ({\color{blue} alternate proof}) & \\
        EFX             &{\color{blue}\cmark (SD)}&  & {\color{blue}\xmark (Theorem \ref{thm:efx}) (even for $m=4$ )} & \\
        EF1             &{\color{blue}\cmark (SD)}& {\color{blue}\cmark (RSD)} & \xmark \cite{amanatidis2017} ($m\geq5$) & {\color{blue}\cmark (RSD)}   \\
        MMS             &{\color{blue}\cmark (SD)}&  & \xmark \cite{amanatidis2016}   &\\
       \hline
    \end{tabular}}
    \caption{Existence of SPF Mechanisms for Various Types of Valuations}
    \label{tab:mvp}

\end{table*}
\end{savenotes}
\subsection{Strategy-Proof Fair}
A direct-revelation mechanism takes all the input valuation functions and returns an allocation. A direct-revelation mechanism is SPF if it ensures fair allocation when no agent can gain higher utility by misreporting. In mechanism design literature, it is standard to introduce payments to design truthful mechanisms, especially in auction settings \cite{hartline03, edith11, TangZ15}. In this paper, we focus on the basic model of fair mechanism design without money. When the goods are divisible, the authors in \cite{menon17, bei17} prove that no deterministic SP mechanism (without monetary transfers) is proportional or even approximately proportional for complete allocation. Since EF is stronger than proportionality, having an SP mechanism for EF is also impossible. It is known that there exist randomized SP mechanisms which ensure EF when the goods are divisible \cite{mossel2010}. There are other works like \cite{cole2013, chen2013, branzei2017} which give SPF mechanisms without money for divisible goods. In \cite{lang11}, the authors show that sequential allocation is strategy-proof when agents have identical rankings. This way of allocation is referred to as \emph{Picking Sequences}.
\subsubsection*{SPF Mechanism for Indivisible goods}
In \citeauthor{lipton2004} \cite{lipton2004}, the authors prove that it is impossible to design a truthful mechanism that achieves minimum envy or EF by providing a counterexample. When there are two agents ($n=2$) and the number of goods ($m$) are greater than 5, there cannot be a deterministic SP mechanism with complete allocation for EF1 even for additive valuations \cite{amanatidis2017}. There are impossibility results for MMS in \cite{amanatidis2016}, the authors prove that for 2 agents, there is no truthful mechanism that ensures better than $\frac{1}{m/2}$ -MMS allocation.

\subsection{Our Contribution}
Given the above limiting results, we explore the following in this paper
\begin{enumerate}
    \item We study the EFX property for two agents, where it is guaranteed to exist. From \cite{amanatidis2017}, it is impossible to have SP mechanism for EF1 with two agents and more than $5$ goods. EFX being a stronger property also follows the same result for the given setting. This paper provides an example that proves that designing an SP mechanism for EFX is impossible even when the number of goods is $4$.  
    \item Aligning with the results of \cite{amanatidis2017}, we provide examples to show that greedy round-robin algorithm and cycle-elimination algorithm for finding EF1 are manipulable. When agents have identical additive allocations, greedy round robin provides allocations that is EF1 as well as strategy-proof.
    \item Given that the valuations can be very complex to represent in general, we restrict ourselves to the simpler case of (SM) \emph{single-minded} agents. SM bidders is very common in the auction literature multi-item setting \cite{rassenti1982combinatorial, mcmillan1994selling}. In such a setting, we provide (SD) \emph{(Serial Dictatorship Mechanism)} that again extends greedy to obtain SP mechanism for EFX, EF1, MMS. SD also provides proportional allocations when they exist.
\end{enumerate}
In Table \ref{tab:mvp}, we summarize all the results for the existence of an SP mechanism for various fairness criteria. (Blue ones are the results in this paper.) When the agents are single-minded, SD is a direct SP mechanism that also ensures EF1, EFX, MMS and proportionality when it exists. When the valuations are (additive) identical, RSD is an SP mechanism that ensures EF1. Additive valuations are the most well-studied in literature. For EF, proportionality, EFX, there are counter-examples when there are $2$ agents for proving that an SP mechanism cannot exist when valuations are additive. Even for MMS and EF1 under additive valuations, the results are for $2$ agents.

\section{Preliminaries}
\subsection{Notation}
Consider the problem of division of indivisible resources. We represent each instance by $\langle N, M, V\rangle$ which are formally defined below,
\begin{itemize}[leftmargin=*]
    \item Finite set of agents $N = \{1,\ldots,n\}$
    \item Finite set of indivisible goods $M = \{1, \ldots, m \}$.
    \item Valuation functions $V$ where $v\in V$ denotes a particular profile and $\forall i\in N$,  $v_i: 2^M \rightarrow \mathbb{R}_{+}$. Let $v_{-i}$ be the valuation profile of all agents excluding $i$. 
    \item We assume $v_i$ is monotonic, $\forall i\in N, \forall S \subseteq T \subseteq M, v_i(S) \le v_i(T) $
    \item Additive valuations imply for any $S\subseteq M, v_i(S) = \sum_{j \in S} v_i(\{j\})$
    \item Identical valuations imply $\forall i, j \in N, \forall S\in M, v_i(S) = v_j(S)$. Identical additive valuations imply $v_i$ is both identical and additive.
    \item Single minded agents with desirable bundles $D = (D_1, \ldots, D_m)$. The valuation of an agent $i \in N$ is given by, for a $c \in \mathbb{R}_{+}$
    \begin{equation}
    \label{eq:sm_Def}
    \forall S \in M,\ v_i(S) = 
\begin{cases}
    c,& \text{if } S\supseteq D_i\\
    0,              & \text{otherwise}
\end{cases}
\end{equation}
    \item The set of all possible complete allocations, $\mathcal{A}$. Given $A \in \mathcal{A}$ denotes a specific allocation and $A_i$ is allocation per agent. By complete allocation we mean if there are $m$ goods then $\forall A, \sum_i |A_i| = m$, assuming each resource can be allocated only to a single agent.
 \end{itemize}
 \subsection{Important Definitions}
 We define the relevant fairness notions below with examples,
 \begin{definition}[Proportionality]
Given an instance $\langle N, M, V \rangle$, the allocation $A$ is proportional \emph{iff} $\forall i \in N$,
$$v_i(A_i) \geq \frac{1}{n} v_i(M) $$
 \end{definition}
\subsubsection{Example.} Consider two agents $1$ and $2$ and three goods $a, b, c$. $v_i$ is given below where $(x, y) \in \{ (a,b), (b,c), (c, a) \}$.
\begin{table}[!htb]
    \centering
    \begin{tabular}{cccccc}
    \hline
        & $v(a)$ & $v(b)$ & $v(c)$ & $v(x,y)$ &$v(a,b,c)$ \\
    \hline
      1 & 10 & 20 & 15 & 30 & 30 \\
      2 & 10 & 20 & 15 & 30 & 30\\
    \hline
    \end{tabular}
\end{table}
Possible proportional allocations are when 1 receives item $b$ and 2 receives goods $\{a,c\}$ or vice versa. Also when agent 1 receives $c$ and agent 2 receives goods $\{a,b\}$ or vice versa.
\begin{definition}[Envy-freeness (EF)]
 For $\langle N, M, V \rangle$ an allocation $A$ is envy-free \emph{iff}, 
$$\forall i, j \in N\ \  v_i(A_i) \geq v_i(A_j) $$
\end{definition}

\subsubsection{Example.} \label{ex:ef} Consider 2 agents $1$ and $2$, two goods $a$, $b$. For agent 1, $v_1(a) = 20, v_1(b) = 10$ and for agent 2, $v_2(a) = 10, v_2(b) = 20$. It is envy-free to allocate $a$ to agent 1 and $b$ to agent 2.

Both the notions of proportionality and EF are too strong in the case of indivisible goods and are not guaranteed to exist. Consider the case when there are two agents and only one item, it is impossible to have any allocation that is either EF or even proportional. When the valuations are sub-additive, every EF allocation is proportional as shown in Figure \ref{fig:img1}. In \cite{lipton2004}, the authors define the following notion weaker than EF.

\begin{definition}[\textbf{EF1}] For $\langle N, M, V \rangle$ an allocation $A$ is EF1 \emph{iff} $\forall i,j \in N,$ $\exists a\in A_j$ such that, $$v_i(A_i) \geq v_i(A_j \backslash \{a\})$$
\end{definition}{}
EF1 allocation always exists for general monotone valuations. Another relaxation of EF stronger than EF1 is defined below,

\begin{definition}[\textbf{EFX}] For $\langle N, M, V \rangle$ an allocation $A$ is EFX \emph{iff}, $\forall i,j \in N,$ $\forall a\in A_j$ such that, $$v_i(A_i) \geq v_i(A_j \backslash \{a\})$$
\label{def:efx}
\end{definition}{}

\subsubsection{Example.} We saw before in Example \ref{ex:ef} that EF allocation is not possible when there are two agents and only one item. But an allocation where the item is assigned to either 1 or 2 is both EF1 and EFX. 

Unlike EF1, EFX is guaranteed to exist only for three agents or when agents have identical valuations. The relations between EF, EFX and EF1 is represented in Figure \ref{fig:img1} for any general monotonic valuations.

In \cite{budish2011}, the author defines another threshold based definition of fairness where each agent is guaranteed at least as much valuation as a risk-averse agent would guarantee itself. By risk-averse we mean an agent who assumes that given a partition of the bundles it might end up in receiving the bundle with minimum valuation. Hence, if the agent decides the partition, it would do so to maximize the value of the minimum bundle. This is in the same spirit of a cut-and-choose protocol.
\begin{definition}[\textbf{Maximin Share} (MMS)]
For $\langle N, M, V \rangle$ an allocation $A$ is MMS \emph{iff}  $\forall i\in N$,  $$v_i(S_i) \geq \mu_i$$ where
$$ \mu_i = \max_{A \in \Pi_n(M)} \min_{A_j \in A} v_i(A_j)$$
\label{def:mms}
\end{definition}{}

\begin{figure}[!t]
    \centering
    \includegraphics[width=0.8\columnwidth]{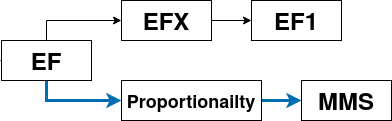}
    \caption{Relation between Various Fairness Criteria \cite{bouveret16}}
    \label{fig:img1}
\end{figure}

\subsubsection{Example.} Consider there are two agents $1$ and $2$, three goods $a, b, c$. We consider additive valuations for both agents. Let the valuation of each item be given as follows,
\begin{table}[!h]
    \centering
    \begin{tabular}{cccc}
    \hline
        & $v(a)$ & $v(b)$ & $v(c)$  \\
    \hline
      1 & 10 & 20 & 40 \\
      2 & 10 & 40 & 20 \\
    \hline
    \end{tabular}
\end{table}
In this example $\mu_1 = \mu_2 = 30$. Agent 1 gets $c$ and 2 gets $a,b$ would be an MMS allocation. 

There is no straight forward relation between EF1/EFX and MMS allocations; for two agents MMS implies EFX. Although any allocation which is  proportional is also MMS when the valuations are sub-additive, Figure \ref{fig:img1}. There exists an MMS allocation for 2 agents but it may not exist for more than two agents.

\subsection{Strategy-Proof Mechanisms}
In mechanism design, we assume the agents are self-interested and strategic. The agents have private information (valuation over goods) that is indispensable for the desired outcome. The agents may or may not reveal their private information based on their individual utility. Mechanism design deals with the two-fold problem of i) \emph{Preference Elicitation} and ii) \emph{Preference Aggregation}. In the former, one explores the specific mechanism in which the agents' best interest lies in revealing their true valuations. The latter, nonetheless challenging, is the problem of obtaining the desired outcome, once the true valuations are known. In our case, this would be finding the fair allocation. There are two kinds of approaches for solving the problem of preference elicitation. 1)Direct Mechanism 2) Indirect Mechanism. We focus on direct mechanism which is defined as follows,
\begin{definition}[Direct Mechanism]
The direct mechanism ($\mathbb{M}$) maps true valuations of the agents to the desired outcome. It is a mapping from the valuations of the agents to the space of allocations $\mathbb{M}: V \rightarrow \mathcal{A}$. 
\end{definition}

 A direct mechanism is strategy-proof (SP) if the agents do not have any incentive to misreport, more formally,
\begin{definition}[Deterministic SP Mechanism]
A deterministic mechanism $\mathbb{M}$ is strategy-proof (SP), if  $\forall v$, $\forall i \in N$, 
$$v_i(\mathbb{M}(v_i, v_{-i})) \geq v_i(\mathbb{M}(v'_i, v_{-i} )), \ \forall v'_i \ \forall v_{-i}$$
where $v'_i$ is a misreported valuation. 
\end{definition}
We also look for little weaker mechanism, in the context of identical valuations. 
\begin{definition}[Deterministic NSP Mechanism]
A deterministic mechanism $\mathbb{M}$ is Nash strategy-proof (NSP), if  $\forall v$, $\forall i \in N$ when other agents report truthfully, 
$$v_i(\mathbb{M}(v_i, v_{-i})) \geq v_i(\mathbb{M}(v'_i, v_{-i} )), \ \forall v'_i$$
where $v'_i$ is a misreported valuation.
\end{definition}

Note that, in NSP, it is best response to each agent to report truthfully if others are reporting truthfully. SP is stronger notion of truthfulness -- no matter what others are reporting, it is a best response for each agent to report truthfully. 

\begin{definition}[strategy-proof Fair Mechanism (SPF)]
A mechanism $\mathbb{M}$ is SPF \emph{iff} $\mathbb{M}$ is SP or NSP and fair (for the given fairness condition).
\end{definition}
Before we discuss the existence of SPF mechanisms for various fairness criteria, we would like to state an observation that makes our search easier. The observation is based on the relationship between the various fairness criteria given in Figure \ref{fig:img1},
\begin{obs}
For any two fairness criteria $X$ and $Y$, if $X \implies Y$ from Figure \ref{fig:img1}, i.e., every allocation that satisfies $X$ also satisfies $Y$. We can conclude that, if there does not exist an SP mechanism for $Y$, then there will not exist an SP mechanism for $X$.
\label{obs:1}
\end{obs}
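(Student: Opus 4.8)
The plan is to prove the contrapositive: I will show that if an SP mechanism for $X$ exists, then an SP mechanism for $Y$ also exists. The key conceptual point is that an ``SP mechanism for a fairness criterion'' bundles together two logically separate requirements --- that the mechanism be strategy-proof (a property of the incentive structure, i.e., of how the output depends on the reported profile) and that, on \emph{every} valuation profile, its output be fair in the sense of that criterion. Strategy-proofness itself makes no reference to the fairness notion, so it is preserved verbatim when we reinterpret the same mechanism as targeting a different fairness criterion.

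Concretely, I would suppose that $\mathbb{M}$ is an SP mechanism for $X$. By definition $\mathbb{M}$ is strategy-proof, and for every profile $v$ the allocation $\mathbb{M}(v)$ satisfies criterion $X$. Since $X \implies Y$ in Figure \ref{fig:img1}, every allocation satisfying $X$ also satisfies $Y$; in particular $\mathbb{M}(v)$ satisfies $Y$ for every $v$. Thus the \emph{same} mechanism $\mathbb{M}$ is strategy-proof and outputs a $Y$-fair allocation on every profile, which is precisely the definition of an SP mechanism for $Y$. This establishes the contrapositive, and hence the observation: the non-existence of an SP mechanism for $Y$ forces the non-existence of an SP mechanism for $X$.

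There is essentially no hard computational step here --- the content lies entirely in unpacking the definitions correctly. The only subtlety worth flagging is that the argument relies on the fairness guarantee being required to hold on \emph{all} profiles (so the inherited $Y$-guarantee is likewise universal) and on strategy-proofness being orthogonal to the fairness target; once these are made explicit, the implication is immediate. Finally, I would note that the identical argument goes through with ``SP'' replaced by the weaker ``NSP'' notion, so the observation applies uniformly in both settings and justifies the propagation of impossibility results (e.g.\ along the chains of Figure \ref{fig:img1}) used throughout the paper.
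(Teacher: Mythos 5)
Your proof is correct and is essentially the argument the paper leaves implicit: the observation is stated without a formal proof, and the intended justification is exactly this contrapositive unpacking of definitions --- a strategy-proof mechanism whose output always satisfies $X$ automatically always satisfies $Y$, so it would be an SP mechanism for $Y$. The caveat you flag is the right one: the argument needs fairness to be required on \emph{all} profiles (as in the paper's SPF definition); for criteria such as EF or proportionality, where a fair allocation may not exist and the guarantee is phrased as ``returns one whenever it exists,'' the inheritance step would require separate care.
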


With this background, we first present impossibilities of SP and fair mechanisms for additive valuations.

\section{Impossibilities of Strategy-Proof and Fair Mechanisms for Additive Valuations.}
As we have discussed before, fair allocations may still cause unrest among agents if some agents choosing to lie benefit when agents adhering to the rules and revealing their true valuations forgo the benefits they could have received. In this paper we are concerned about the existence of truthful mechanisms that can implement the fairness definitions defined above. 

\subsection{EF}
In \cite{lipton2004}, the authors raise the question of the existence of truthful mechanisms that implement EF. 
\begin{theorem} \cite{lipton2004}
Any mechanism that returns an allocation with minimum possible envy cannot be truthful. The same is true for any mechanism that returns an envy-free allocation whenever there exists one.
\label{thm:ef1}
\end{theorem}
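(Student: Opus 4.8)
The plan is to prove this impossibility result by explicit construction of a counterexample, exhibiting a small instance in which any EF-respecting mechanism must be manipulable. Since the claim is a non-existence statement about strategy-proof mechanisms, the natural strategy is to produce a single valuation profile (with as few agents and goods as possible) where the EF (or minimum-envy) allocation is essentially forced, and then to display a misreport by one agent that yields a strictly better allocation for that agent under any such mechanism.

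First I would fix a minimal setting, most likely $n=2$ agents and a handful of goods, and choose additive valuations so that the set of envy-free (or minimum-envy) allocations is tightly constrained. The idea is to engineer the true profile $v$ so that the mechanism $\mathbb{M}$ is forced to give agent $i$ some bundle $\mathbb{M}(v_i, v_{-i})$, while there exists an alternative report $v_i'$ for which the unique (or forced) EF allocation on the profile $(v_i', v_{-i})$ assigns agent $i$ a bundle it values strictly more under its \emph{true} valuation $v_i$. Concretely, I would tabulate the valuations, enumerate the finitely many complete allocations, identify which ones are EF (or achieve minimum envy), and argue that the mechanism's output on the truthful profile is dominated by what agent $i$ can secure by lying. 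The key algebraic content is just checking the finitely many allocations against the envy-freeness inequalities $v_i(A_i) \ge v_i(A_j)$ for both agents.

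The main obstacle I anticipate is handling the fact that the mechanism need not return a \emph{unique} allocation-preferring one when several EF allocations exist; an adversarial designer might pick whichever EF allocation defeats the manipulation. To close this, I would design the instance so that the manipulation works \emph{regardless} of how ties are broken: either the EF allocation on the truthful profile is unique (forcing the mechanism's hand), or every EF allocation on the manipulated profile is strictly better for agent $i$ than every EF allocation on the truthful profile. This uniformity over tie-breaking is what makes the impossibility robust, and getting the numbers to enforce it simultaneously for both the ``minimum envy'' and the ``EF whenever it exists'' variants is the delicate part. I would likely arrange that on the truthful profile no EF allocation exists (so the mechanism only guarantees minimum envy), while on the manipulated profile an EF allocation does exist and is favorable to the manipulator, which lets a single construction settle both sentences of the theorem at once.

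Finally I would verify the two separate assertions of the statement explicitly: that the constructed instance defeats any minimum-envy mechanism, and that it also defeats any mechanism returning an EF allocation whenever one exists. Since both claims rest on the same profile and the same beneficial deviation $v_i \to v_i'$, the verification reduces to confirming the envy inequalities and the strict utility gain $v_i(\mathbb{M}(v_i', v_{-i})) > v_i(\mathbb{M}(v_i, v_{-i}))$, completing the contradiction with strategy-proofness.
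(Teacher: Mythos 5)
Your overall strategy --- a small two-agent additive counterexample in which every allocation the mechanism is permitted to output at the truthful profile is dominated, for some deviating agent, by every allocation it is permitted to output at that agent's deviated profile --- is exactly the approach behind this theorem (the paper does not reproduce the numbers; it cites the example of \cite{lipton2004} and records only the abstract skeleton, namely the condition in Equation~(\ref{eq:1}) together with the observation that any EF mechanism must select its output from $\mathcal{A}^{EF}_v$). Your emphasis on making the manipulation robust to the mechanism's tie-breaking among multiple EF allocations is also the right key point.

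However, your proposed way of settling both sentences ``at once'' has a genuine gap. You suggest arranging the truthful profile so that \emph{no} envy-free allocation exists, while the manipulated profile admits EF allocations favorable to the manipulator. That handles the minimum-envy sentence (a minimum-envy mechanism is still pinned down at the truthful profile), but it breaks the argument for the second sentence: a mechanism that returns an EF allocation only ``whenever there exists one'' is completely unconstrained at a profile with no EF allocation. It could, for instance, hand the entire set of goods to your designated manipulator at the truthful profile, after which the deviation yields no strict gain; to defeat such mechanisms under your arrangement you would need a profitable deviation against \emph{every} complete allocation at the truthful profile, not merely against the minimum-envy ones, and your construction does not supply that. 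The correct unification is the opposite one, and it is what the paper's sketch encodes: choose the instance so that EF allocations exist both at the truthful profile and at each deviated profile. Then the minimum possible envy is zero at all relevant profiles, so a minimum-envy mechanism and an ``EF whenever it exists'' mechanism are each forced to select from the EF sets, and a single example in which every truthful EF allocation is strictly dominated (under the deviator's true valuation) by all EF allocations of the deviated profile kills both statements simultaneously. A smaller point: you require either a unique truthful EF allocation or one fixed agent whose deviation dominates all truthful EF allocations; it suffices --- and is how the symmetric counterexamples in this paper for proportionality (Table~\ref{tab:p}) and EFX (Table~\ref{tab:e}) are actually built --- to let different truthful allocations be attacked by different agents, which is considerably easier to engineer.
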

As a proof, the authors provide an example consisting of two agents with additive valuation function, where every possible envy-free allocation can be manipulated by either of the agents. An SP mechanism for EF, ($\mathbb{M}^{EF}$)  would select from $\mathcal{A}^{EF}$, i.e, a set of all EF allocations. If there exists a valuation profile $v$, where $\mathcal{A}^{EF}_v$ be all possible EF allocations for $v$, $\forall A^{EF} \in \mathcal{A}^{EF}_v$ and with strict inequality for atleast one $A^{EF}$. 

\begin{equation}
     \exists i,\ \exists v'_i \ s.t.,\ \  v_i(\tilde{A}^{EF}_i) \geq v_i(A^{EF}_i), \quad \forall \tilde{A}^{EF} \in \mathcal{A}^{EF}_{v'}
     \label{eq:1}
\end{equation}
We know that for any deterministic SP mechanism that ensures EF, $\mathbb{M}^{EF}(v) \subseteq \mathcal{A}^{EF}_v$, hence the Equation \ref{eq:1} holds for $\mathbb{M}^{EF}(v)$ which implies that no matter the mechanism, it is always manipulable by certain agent $i$ under the valuation profile, $v$.

\subsection{Proportionality}
For sub-additive valuations, proportionality is a stronger property than MMS (Figure \ref{fig:img1}). In \cite{amanatidis2016}, the authors prove that for 2 agents there is no SP mechanism that ensures better than $\frac{1}{m/2}$-MMS allocation. Hence, it is impossible to have SP mechanism which ensure MMS and hence proportionality for 2 agents. We prove the same by constructing an example guided by Equation \ref{eq:1}.

\bigskip
\noindent Example. Consider $n=2$, $m=3$, we have agents $\{ 1, 2\}$ and goods, $\{a,b,c\}$. The true valuations $v$ are given by Table \ref{tab:p1}. For truthful reporting, there are 2 possible proportional allocations $\mathcal{A}^{prop}_v$ given by Table \ref{tab:p2}. For the first allocation $A^I$, agent 1 obtains a value of $20$. If $1$ reports $v'_i$ as given in Table \ref{tab:p3}, then the only possible proportional allocation is given in Table \ref{tab:p4}, the value for which is $30$, is strictly better than what she was offered. Similarly for next allocation $A^{II}$, agent 2 has an incentive to misreport.

\begin{table}[!htb]
	\centering
	\begin{subtable}[t]{2in}
		\centering
	    \begin{tabular}{cccc}
        \hline
         & $v(a)$ & $v(b)$ & $v(c)$  \\
        \hline
          1 & 20 & 10 & 5  \\
          2 & 5 & 10 & 20  \\
        \hline
        \end{tabular}
        \caption{The true values}\label{tab:p1}
	\end{subtable}
	\quad
	 \begin{subtable}[t]{1in}
	 \centering
    	\begin{tabular}{c|cc}
        \hline
         & 1 & 2 \\
        \hline
         $A^I$ &  a & bc \\
         $A^{II}$ &  ab & c \\
        \hline
        \end{tabular}
        \caption{$\mathcal{A}^{prop}_v$}\label{tab:p2}
   \end{subtable}
	\begin{subtable}[t]{2in}
		\centering
		\begin{tabular}{cccc}
        \hline
         & $v(a)$ & $v(b)$ & $v(c)$   \\
        \hline
          1 & 10 & 10 & 10  \\
          2 & 5 & 10 & 20  \\
        \hline
        \end{tabular}
        \caption{Agent 1 misreports}\label{tab:p3}
	\end{subtable}
	\begin{subtable}[t]{1in}
	\centering
    	\begin{tabular}{c|cc}
        \hline
        & $1$ & $2$ \\
        \hline
         $\tilde{A}^I$ & ab & c \\
        \hline
        \end{tabular}
        \caption{$\mathcal{A}^{prop}_{v'}$}\label{tab:p4}
	\end{subtable}
	\caption{Counter Example for Proportionality}\label{tab:p}
	
\end{table}

\subsection{EFX}
In \cite{amanatidis2017}, they prove that it is impossible to design SP mechanism for EF1 for $n=2$ and $m\geq 5$. Since EFX (Definition \ref{def:efx}) is a stronger property the same result holds when $m>=5$. We prove that it is also impossible to have an SP mechanism for EFX when $m=4$.
\begin{theorem}
Any mechanism that returns an allocation that is EFX cannot be truthful even in the case of additive valuations.
\label{thm:efx}
\end{theorem}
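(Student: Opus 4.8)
The plan is to mirror the template already used for the EF and proportionality impossibilities: exhibit a single true additive profile $v$ on $n=2$ agents and $m=4$ goods for which \emph{every} EFX allocation is manipulable, so that by Observation~\ref{obs:1} and the reasoning around Equation~\ref{eq:1} no strategy-proof mechanism whose range lies inside the EFX allocations can be truthful. Concretely, I would show that for each $A \in \mathcal{A}^{EFX}_v$ there is an agent $i$ and a report $v'_i$ such that every allocation in $\mathcal{A}^{EFX}_{(v'_i, v_{-i})}$ gives $i$ a bundle she truly values at least as much as $A_i$, with strict inequality in at least one case; this is exactly the deviation condition of Equation~\ref{eq:1}, so whichever EFX allocation the mechanism selects under $v$, some agent strictly prefers to lie.

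First I would replace Definition~\ref{def:efx} by its additive two-agent form: for agents $i \ne j$ the binding constraint is the removal of the good in $A_j$ that $i$ values least, since by additivity $v_i(A_j \setminus \{a\}) = v_i(A_j) - v_i(a)$ is largest exactly when $v_i(a)$ is smallest. Hence $A$ is EFX \emph{iff} $v_i(A_i) \ge v_i(A_j) - \min_{a \in A_j} v_i(a)$ for both agents (the condition being vacuous when $A_j = \varnothing$). This turns the EFX test into a single inequality per agent and makes the search over the $2^4 = 16$ complete allocations tractable by hand. I would then choose values symmetric across the two agents, as in the proportionality example, so that $\mathcal{A}^{EFX}_v$ splits into two mirror-image families, one favouring agent $1$ and one favouring agent $2$, which halves the verification.

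Next, for an allocation $A$ that favours agent $2$ I would build a misreport $v'_1$ for agent $1$ that simultaneously (i) keeps at least one EFX allocation feasible under $(v'_1, v_2)$, so the mechanism must still output something EFX, and (ii) makes every EFX allocation of the reported profile assign agent $1$ a bundle she truly values strictly more than $A_1$; the symmetric misreport handles the agent-$1$-favoured allocations. The main obstacle is requirement (ii): I must re-enumerate the sixteen allocations under each reported profile and check that \emph{all} resulting EFX allocations, not merely some, strictly benefit the deviator, while at the same time keeping the profile EFX-feasible. Balancing this two-sided constraint is what pins down the four numerical values, and the enlarged allocation space at $m=4$ (versus $m=3$ for proportionality) makes this bookkeeping, rather than any conceptual step, the delicate part of the argument.
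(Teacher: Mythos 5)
Your plan follows exactly the paper's route -- a two-agent, four-good additive counterexample, symmetric across the agents, in which each of the few EFX allocations under truth is defeated by a per-allocation misreport -- and your reduction of EFX under additivity to the single inequality $v_i(A_i) \ge v_i(A_j) - \min_{a\in A_j} v_i(a)$ per agent is correct and does make the enumeration tractable. But there is a genuine gap: for this theorem the proof \emph{is} the counterexample, and you never exhibit one. Everything you write is a search strategy; whether there actually exists a profile that (a) has a small, fully enumerable EFX set and (b) admits, for each allocation in that set, a misreport that keeps the reported profile's EFX set nonempty while making every surviving EFX allocation strictly better in true value for the deviator, is precisely the content of the theorem. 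You acknowledge that "pinning down the four numerical values" is the delicate part, but that is the part you leave undone, so the proposal is a plan rather than a proof. For the record, the paper's instantiation is $v_1 = (40,100,20,40)$ and $v_2 = (100,40,20,40)$ over goods $a,b,c,d$: the EFX allocations are exactly $(\{b\},\{a,c,d\})$, $(\{b,c\},\{a,d\})$, $(\{b,d\},\{a,c\})$, $(\{b,c,d\},\{a\})$; against the first two (agent 1's true value $100$ or $120$), agent 1 reports $(90,70,15,25)$, after which the only EFX allocations give her $\{b,d\}$ or $\{b,c,d\}$, true value at least $140$, a strict gain; the mirror misreport by agent 2 handles the other two allocations.

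Separately, the deviation condition in your first paragraph is too weak for deterministic mechanisms: if every EFX allocation of the reported profile is only \emph{weakly} better for the deviator, "with strict inequality in at least one case," the mechanism may simply select an allocation at which the deviator is indifferent, and strategy-proofness is not violated. The condition you need -- and do state correctly as requirement (ii) in your last paragraph -- is that \emph{every} EFX allocation of the reported profile give the deviator strictly more true value than the bundle received under truth; any fix of the paper's loosely phrased Equation \ref{eq:1} has to have this form, and it is what the numbers above deliver.
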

\begin{proof}
Consider an example where $n=2$, we have agents $\{1,2 \}$ and $m=4$, $\{a,b,c,d\}$. For truthful reporting $v$ as given in Table \ref{tab:e1}, there are 4 possible EFX allocations $\mathcal{A}^{EF}_v$ given by Table \ref{tab:e2}. For the first two allocations $A^I, A^{II}$, agent 1 receives $b$ which it values at $100$ or $\{b,c\}$ which it values $120$. If agent $1$ reports $v'_i$ as given in Table \ref{tab:e3}, where the total valuation is same i.e., 200. Under this misreport, the only possible EFX allocations are given in Table \ref{tab:e4} in which agent 1 receives at least $140$ which is at least as good as the value it received for truthful reporting. Similarly for next two allocations $A^{III}, A^{IV}$ agent 2 has an incentive to misreport. Hence there are only four possible EFX allocations and for each allocation at least one agent has an incentive to misreport. 
\begin{table}[!htb]
	\centering
	\begin{subtable}[t]{2in}
		\centering
	    \begin{tabular}{ccccc}
        \hline
         & $v(a)$ & $v(b)$ & $v(c)$ & $v(d)$ \\
        \hline
          1 & 40 & 100 & 20 & 40 \\
          2 & 100 & 40 & 20 & 40 \\
        \hline
        \end{tabular}
        \caption{The true values}\label{tab:e1}
	\end{subtable}
	\quad
	 \begin{subtable}[t]{1in}
	 \centering
        \begin{tabular}{c|cc}
        \hline
         & $1$ & $2$ \\
        \hline
         $A^I$ &  b & acd \\
         $A^{II}$ &  bc & ad \\
         $A^{III}$ &  bd & ac \\
         $A^{IV}$ &  bdc & a \\
        \hline
        \end{tabular}   
        \caption{$\mathcal{A}^{EFX}_v$}\label{tab:e2}
   \end{subtable}
	\begin{subtable}[t]{2in}
		\centering
    	\begin{tabular}{ccccc}
        \hline
         & $v(a)$ & $v(b)$ & $v(c)$ & $v(d)$ \\
        \hline
          1 & 90 & 70 & 15 & 25 \\
          2 & 100 & 40 & 20 & 40 \\
        \hline
        \end{tabular}
        \caption{Agent 1 misreports}\label{tab:e3}
	\end{subtable}
	\begin{subtable}[t]{1in}
	\centering
        \begin{tabular}{c|cc}
        \hline
         & $1$ & $2$ \\
        \hline
         $\tilde{A}^I$  & bd & ac \\
          $\tilde{A}^I$ & bdc & a \\
        \hline
        \end{tabular}
        \caption{$\mathcal{A}^{EFX}_{v'}$}\label{tab:e4}
	\end{subtable}
	\caption{Counter Example for EFX}\label{tab:e}
	
\end{table}
\end{proof}

With this we can conclude, that under additive valuations, there is an instance where no SP mechanism can be EFX.

\subsection{EF1}
In this subsection, we explore the existing algorithms that find EF1 allocations and prove that these are manipulable. We provide an instance with $n=2$ for each case.
\subsubsection{Greedy Round-Robin Algorithm}
In \cite{cara16}, the authors provide a simple algorithm for obtaining EF1 allocations when the valuations are additive. It involves the following steps,
\begin{itemize}
    \item Fix an arbitrary order on the agents
    \item Allocate the first agent its most valuable good
    \item The next agent is allocated its most valuable among the remaining goods
    \item The algorithm terminates when all the goods are allocated
\end{itemize}
The following example shows that the above algorithm can be manipulated by the agents.
\begin{proposition}
Greedy round-robin algorithm is manipulable for additive valuations. 
\label{prop:greedy}
\end{proposition}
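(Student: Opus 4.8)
The plan is to exhibit a concrete instance with $n=2$ agents and a small number of goods on which the greedy round-robin algorithm, under some fixed tie-breaking and agent ordering, produces an allocation that one agent can strictly improve upon by misreporting her additive valuations. Since the algorithm as stated fixes an arbitrary order on the agents and then lets each agent greedily grab her most valuable remaining good, I would first pin down the order (say agent $1$ picks first in each round) and then design values so that truthful reporting forces agent $1$ into a bundle she can beat by lying.

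First I would choose the true additive values so that, when agent $1$ reports truthfully, the greedy picks (alternating between the two agents) hand her a bundle of total value $t$. The key mechanism to exploit is that greedy is \emph{myopic}: it always takes the single most valuable remaining good without regard to what the opponent will subsequently take. So I would arrange the true profile so that agent $1$'s genuinely most-valued good is one that the algorithm lets her keep, but that grabbing it early allows agent $2$ to sweep up several goods that agent $1$ secretly also values. Then I would construct a misreport $v_1'$ that reorders agent $1$'s apparent preferences—typically inflating the reported value of a good agent $1$ truly wants but would otherwise lose, and deflating the good greedy would have given her first—so that the new sequence of picks leaves agent $1$ with a bundle of strictly larger \emph{true} value $t' > t$. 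Concretely I expect a profile with four or so goods to suffice, mirroring the style of the EFX counterexample in Theorem~\ref{thm:efx}: present the true values in one table, the resulting greedy allocation, the misreport in a second table, and the resulting (better-for-agent-$1$) greedy allocation.

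The verification step is then purely computational: trace the greedy algorithm on both $v$ and $(v_1', v_2)$, record which good each agent picks in each round, and compare agent $1$'s true utility for the two resulting bundles to confirm the strict gain $v_1(\text{bundle under }v_1') > v_1(\text{bundle under }v_1)$. I would also note the tie-breaking rule used, since greedy's output depends on it and the manipulation must succeed under the same rule used for the truthful run.

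The main obstacle, I expect, is not the logic but the bookkeeping of designing the numbers so that (i) the misreport genuinely changes the sequence of greedy picks in the intended direction, (ii) the change is strictly beneficial under the \emph{true} values rather than just the reported ones, and (iii) the manipulation is robust to the fixed tie-breaking convention rather than relying on a favorable tie being broken the ``right'' way. A subtle point to get right is that agent $1$ must end up with a good she truly values highly but which greedy, run honestly, would have awarded to agent $2$; balancing the reported values so that the swap occurs without inadvertently costing agent $1$ an even more valuable good is the delicate part. Once a valid instance is found, the proposition follows immediately by displaying it, exactly as the proof of Theorem~\ref{thm:efx} proceeds by counterexample.
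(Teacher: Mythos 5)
Your high-level strategy---exhibit a concrete two-agent instance, trace greedy round-robin under the truthful profile and under a misreport, and show a strict gain in \emph{true} utility---is exactly how the paper proves this proposition (it uses a five-good instance and gives separate manipulations for each of the two possible agent orderings). The problem is that your proposal never actually produces the instance. You describe the design principles (exploit greedy's myopia, inflate the value of a good you would otherwise lose, deflate the one greedy would hand you first), you correctly identify the delicate balancing as the hard part, and you then assert that ``a profile with four or so goods'' should work---but no valuation table, no misreport, and no pick-by-pick trace is given. For a proposition whose entire content is an existence claim, the explicit counterexample \emph{is} the proof; a recipe for finding one, however plausible, leaves nothing that can be checked, and the step you yourself flag as delicate is precisely the step left undone.

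The gap is fillable, and your guess that four goods suffice is correct: take $v_1(a)=12$, $v_1(b)=10$, $v_1(c)=8$, $v_1(d)=1$ and $v_2(a)=1$, $v_2(b)=10$, $v_2(c)=8$, $v_2(d)=9$ with order $1\rightarrow 2$. Truthfully, the picks are $a$ (agent 1), $b$ (agent 2), $c$ (agent 1), $d$ (agent 2), so agent 1 gets $\{a,c\}$ worth $20$; if agent 1 instead reports $a$ and $b$ with swapped values, the picks become $b$, $d$, $a$, $c$, and agent 1 gets $\{a,b\}$ worth $22$ under her true valuation, with no ties arising in either run. One further point: since the algorithm fixes an \emph{arbitrary} order on the agents, a complete argument should cover every possible order---the paper does this by letting agent 1 manipulate under the order $1\rightarrow 2$ and agent 2 under $2\rightarrow 1$---whereas your plan pins down a single order at the outset, which would establish manipulability only for that particular choice.
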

\begin{table}[!htb]
	\centering
	\begin{subtable}[t]{2in}
		\centering
	    \begin{tabular}{cccccc}
        \hline
         & $v(a)$ & $v(b)$ & $v(c)$ & $v(d)$ & $v(e)$\\
        \hline
          1 & 12 & 10 & 8 & 6 & 1 \\
          2 & 1 & 10 & 8 & 6 & 9\\
        \hline
        \end{tabular}
        \caption{The true values}\label{tab:g1}
	\end{subtable}
	\quad
		\begin{subtable}[t]{1in}
	\centering
        \begin{tabular}{c|cc}
        \hline
         & $1$ & $2$ \\
        \hline
         $1 \rightarrow 2$  & acd & be \\
          $2 \rightarrow 1$ & ac & bde \\
        \hline
        \end{tabular}
        \caption{EF1 allocations}\label{tab:g2}
	\end{subtable}
	 \begin{subtable}[t]{2in}
	 \centering
         \begin{tabular}{cccccc}
        \hline
         & $v(a)$ & $v(b)$ & $v(c)$ & $v(d)$ & $v(e)$ \\
        \hline
          1 & 10 & 12 & 8 & 6 & 1\\
          2 & 1 & 10 & 8 & 6 & 9\\
        \hline
        \end{tabular}   
         \caption{Agent 1 misreports}\label{tab:g3}
   \end{subtable}
   \quad
   		\begin{subtable}[t]{1in}
	    \centering
        \begin{tabular}{c|cc}
        \hline
         & $1$ & $2$ \\
        \hline
         $1 \rightarrow 2$  & abd & ce \\
        \hline
        \end{tabular}
        \caption{EF1 allocation}\label{tab:g4}
	\end{subtable}
   
	\begin{subtable}[t]{2in}
		\centering
    	\begin{tabular}{cccccc}
        \hline
         & $v(a)$ & $v(b)$ & $v(c)$ & $v(d)$ & $v(e)$\\
        \hline
          1 & 12 & 10 & 8 & 6 & 1 \\
          2 & 1 & 10 & 8 & 8 & 5 \\
        \hline
        \end{tabular}
        \caption{Agent 2 misreports}\label{tab:g5}
	\end{subtable}
	  \quad
   		\begin{subtable}[t]{1in}
	    \centering
        \begin{tabular}{c|cc}
        \hline
         & $1$ & $2$ \\
        \hline
         $2 \rightarrow 1$  & ad & bce \\
        \hline
        \end{tabular}
        \caption{EF1 allocation}\label{tab:g6}
	\end{subtable}
	\caption{Greedy round-robin is manipulable}\label{tab:greed}
	
\end{table}
\begin{proof}
 Example. Consider $n=2$, $\{1,2\}$, $m=5$ , $\{a,b,c,d, e\}$ where the valuations are additive and  given by Table \ref{tab:g1}. When $n=2$, there is only two possible orders among the agents. When applying greedy algorithm with 1 followed by 2 or $1\rightarrow 2$, agent 1 gets $\{a,c,d\}$ (Table \ref{tab:g2}) with a value of $26$. If the agent misreports its value as given in Table \ref{tab:g3}, the allocation that agent 1 gets is $\{a,b,d \}$ which it values at $28$ that is strictly more than when it was truthful. Similarly agent 2 can misreport to an advantage when the order is $2 \rightarrow 1$. It improves its allocation from $\{ b,d,e\}$ (Table \ref{tab:g2}) that it values at $25$  to $\{b,c,e \}$ whose value is $27$.   
\end{proof}

\subsubsection{Cycle-elimination Algorithm}
Greedy method fails for general valuations, instead the cycle-elimination algorithm \cite{lipton2004} provides EF1 solution in polynomial time in general. The algorithm is as follows,
\begin{itemize}
    \item Goods are allocated in arbitrary order
    \item An envy-graph is maintained where the agents are the vertices and a directed edge $i \rightarrow j$ represents that agent $i$ envies agent $j$ under the current allocation.
    \item The next item is allocated to the agent with no incoming edge. If there is a cycle, it can be eliminated by exchanging the goods of the agents that form the cycle, with the ones they envy.
\end{itemize}
We show that the above algorithm is manipulable by the agents.
\begin{proposition}
Cycle-elimination algorithm is manipulable even for identical valuations.
\end{proposition}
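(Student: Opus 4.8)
The plan is to exhibit a concrete instance on which the cycle-elimination algorithm is manipulable, mirroring the counterexample structure used for the greedy round-robin case (Proposition~\ref{prop:greedy}). Since the claim strengthens manipulability to the \emph{identical} valuations regime, I would fix a single additive valuation function shared by all agents and then track exactly how the algorithm processes goods in its fixed arbitrary order. First I would take $n=2$ agents with identical additive valuations over a small set of goods, since two agents already suffice to expose manipulation and keep the envy-graph bookkeeping trivial (with two vertices the only possible cycle is the $2$-cycle $1 \to 2 \to 1$). I would run the algorithm on the truthful profile, carefully recording at each step which agent currently has no incoming envy edge, hence receives the next good, and when a cycle forms and goods get swapped. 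This yields a truthful allocation $A$ and a concrete value $v_1(A_1)$ for the agent I intend to have deviate.

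Next I would construct the misreport $v_1'$. The subtlety here, and the main obstacle, is that under identical valuations the deviating agent cannot simply inflate a good it wants in isolation: changing its reported values reshuffles the \emph{entire} tie-breaking and envy structure of the run, because the same numbers feed both the allocation order and the envy comparisons. So the report must be engineered so that the altered envy-graph dynamics steer a more valuable good (measured by the agent's \emph{true} valuation) into its bundle, while the other agent still follows the deterministic rule. I would therefore choose values that create or destroy a particular envy edge at a pivotal step, changing who is declared ``sink'' when the crucial good is allocated, or forcing (respectively suppressing) a cycle-elimination swap at the right moment. The deviation is only meaningful if the agent's \emph{true} utility for the resulting bundle strictly exceeds $v_1(A_1)$, so I would verify the inequality under $v_1$, not under $v_1'$.

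Finally I would present the instance in a table analogous to Table~\ref{tab:greed}: the identical true values, the truthful run's allocation with its true-value to the deviator, the misreported values, and the resulting allocation whose true-value is strictly larger. I expect the verification to reduce to two short hand-traces of the algorithm (one per profile), each a routine check that I would not grind through here but would lay out step-by-step in the actual proof. The real design effort, as noted, is back-solving the misreport so that a single change in the shared valuation vector flips exactly one envy relation at the decisive iteration; once such numbers are found, strict improvement in true value follows by direct comparison, completing the counterexample and hence the proposition.
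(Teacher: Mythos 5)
Your overall strategy matches the paper's: a two-agent instance with identical valuations, where one agent's misreport perturbs the envy-graph dynamics so that it ends up with a bundle of strictly higher \emph{true} value. The lever you identify (flipping one envy edge at a pivotal step so the manipulator becomes the unique sink, or forcing a cycle-elimination swap) is exactly the lever the paper's example uses. However, as submitted your proposal has a genuine gap: for a proposition whose entire proof \emph{is} a counterexample, you never exhibit the counterexample. You defer precisely the content that constitutes the proof --- the explicit shared valuation, the explicit misreport, and the two traces --- and label it routine back-solving. It is not routine in the relevant sense: the existence of such numbers is the whole claim, and nothing is established until they are displayed. For reference, the paper's main instance is goods $\{a,b,c,d\}$ with identical valuations $v(a)=v(b)=v(c)=5$, $v(d)=10$, and $v(S)=16$ for every two-element $S\subseteq\{a,b,c\}$ (note: \emph{non-additive}), all other sets additive. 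The agent who receives $d$ ends the truthful run with $\{d,c\}$ worth $15$; misreporting $v'(d)=4$ creates a mutual-envy $2$-cycle after the second good is placed, forces a swap, and leaves the manipulator with a pair from $\{a,b,c\}$ worth $16$ under its true valuation, for either of the two ways the rest of the run can unfold.

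A second, more technical concern with your specific plan: you commit to identical \emph{additive} valuations, but under truthful identical reports a two-agent envy cycle can never form (it would require $v(A_1)<v(A_2)$ and $v(A_2)<v(A_1)$ simultaneously), so in the truthful run the only degree of freedom is the tie-breaking among sinks; the swap you hope to ``force or suppress'' exists only in the misreported run. This is why the paper's primary example uses non-additive (superadditive-pair) identical valuations, and treats the additive identical case only as a secondary remark: there, with $v=(5,5,12)$ and misreport $v'(a)=4$, the manipulation succeeds only for particular good orderings such as $(a,b,c)$ or $(b,a,c)$, i.e., for particular deterministic instantiations of the algorithm. To complete your proof you must fix one instantiation (the good order and the sink tie-breaking rule), hold it fixed across the truthful and misreported runs, display the numbers, and verify the strict gain in true value for that instantiation.
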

\begin{table}[!htb]
	\centering
	\begin{subtable}[t]{2in}
		\centering
	    \begin{tabular}{ccc}
        \hline
          1 & 2 & graph \\
        \hline
          d &  & $1 \leftarrow 2$ \\
          d & a & $1 \leftarrow 2$ \\
          d & ab & $1 \rightarrow 2$ \\
          dc & ab & $1 \rightarrow 2$ \\
        \hline
        \end{tabular}
        \caption{Cycle-elimination on $v$}\label{tab:c1}
	\end{subtable}
		\begin{subtable}[t]{2in}
	\centering
       \begin{tabular}{ccc}
        \hline
          1 & 2 & graph \\
        \hline
          d &  & $1 \leftarrow 2$ \\
          d & a & $1 \rightleftharpoons 2$ \\
          a & d & no envy \\
          ab & d & $1 \leftarrow 2$ \\
          ab & dc & $1 \leftarrow 2$ \\
        \hline
        \end{tabular}
        \caption{Cycle-elimination on $v'$}\label{tab:c2}
	\end{subtable}
	\begin{subtable}[t]{1in}
       \begin{tabular}{ccc}
        \hline
          1 & 2 & graph \\
        \hline
          d &  & $1 \leftarrow 2$ \\
          d & a & $1 \rightleftharpoons 2$ \\
          a & d & no envy \\
          a & bd & $1 \rightarrow 2$ \\
          ac & bd & $1 \leftarrow 2$ \\
        \hline
        \end{tabular}
        \caption{Cycle-elimination on $v'$}\label{tab:c3}
	\end{subtable}
	\caption{Cycle-elimination is manipulable}\label{tab:cyc}
\end{table}

Consider $n=2$, $\{1,2\}$, and $m=4$, $\{a,b,c,d \}$. Let $(x,y) \in \{(a,b) (b,c), (a,c)\}$ where the valuations $v$ of the agents are identical and given below. The value of other subsets not mentioned below are additive.
\begin{table}[!htb]
    \centering
    \begin{tabular}{ccccc}
    \hline
     $v(a)$ & $v(b)$ & $v(c)$ & $v(d)$ &$v(x, y)$  \\
    \hline
        5 & 5 & 5 & 10 & 16 \\
    \hline
    \end{tabular}
\end{table}
When we run the cycle-elimination algorithm, the steps are as given in the Table \ref{tab:c1}. It is easy to see that the agent who gets the item $d$ (w.l.o.g we assume agent $1$ gets $d$) always ends up with a value $15$ and can try to increase the utility by gaining the other bundle whose value is $16$. Now consider the following misreported valuation by the agent who gets item $d$. 
\begin{table}[!htb]
    \centering
    \begin{tabular}{ccccc}
    \hline
     $v'(a)$ & $v'(b)$ & $v'(c)$ & $v'(d)$ &$v'(x, y)$  \\
    \hline
        5 & 5 & 5 & 4 & 16 \\
    \hline
    \end{tabular}
\end{table}
With the misreported valuation, we again run the cycle-elimination algorithm and there can be two possible outcomes as presented in Table \ref{tab:c2}, \ref{tab:c3}. We see that in these cases the agent $1$ receives $\{a,b \}$ or $\{ a, c\}$ which it values at $16$ i.e., strictly more than the previous value for $\{ d,c\}$ that is 15. In this example we prove that there is always an agent that can manipulate to increase its utility. 

In the above case, we considered an example for (general) identical valuations. In fact it also possible to manipulate cycle-elimination for (additive) identical valuations.

\noindent Example. Consider $n=2$ and $m=3$ where the agents have the following (additive) identical valuation $v$ and the agent that does not receive the good $c$ misreports the valuation to $v'$ also given below.
\begin{table}[!htb]
    \centering
    \begin{tabular}{ccccc}
     \hline
     $v(a)$ & $v(b)$ & $v(c)$   \\
    \hline
        5 & 5 & 12 \\
    \hline
     $v'(a)$ & $v'(b)$ & $v'(c)$   \\
    \hline
        4 & 5 & 12 \\
    \hline
    \end{tabular}
\end{table}
For many orderings over $m$ that the algorithm chooses, the value obtained for $v'$ is as good as $v$. But when the ordering is chosen to be $a$ then $b$ then $c$ or $(b, a, c)$, the agent manipulating ensures a value of $17$ as opposed to just receiving $5$.  
\section{Identical Additive Valuations}
When valuations are identically additive, we know that picking sequences are strategy-proof \cite{lang11}. Based on picking sequences, we provide an algorithm, (RSD) \emph{(Repeated Serial Dictatorship)} to obtain truthfulness while ensuring EF1.
\begin{algorithm}[!t]

 \SetKwData{Left}{left}\SetKwData{This}{this}\SetKwData{Up}{up}
 \SetKwFunction{Union}{Union}\SetKwFunction{FindCompress}{FindCompress}
 \SetKwInOut{Input}{Input}\SetKwInOut{Output}{Output}

 \Input{ $\langle N, M, V \rangle$, $V$ is identical additive}
 \Output{$(A_1, A_2, \ldots, A_n) \in \mathcal{A}^{EF1}$}
 \BlankLine

 Set an arbitrary but fixed order on the agents, w.l.o.g, $(1, 2, \ldots, n)$  \;
 $A_i = \phi, \quad \forall i$\;
 $R = M$ (goods remaining after each iteration) \;
 $i =0$ (agent number)\;
 \While{$R \neq \phi$}{
  $x \in \underset{j \in R}{argmax} \ v_i(j) $ \;
  $A_i = A_i \bigcup x$ \;
  $R = R \setminus x$ \;
  $i = (i+1) \mod n$ \;
 }
 \caption{Repeated Serial Dictatorship Mechanism (RSD)}\label{algo:rsd}
\end{algorithm}

 Repeated Serial Dictatorship is EF1 and (i) SP when $m \leq n$ and (ii) NSP when the valuations are identical and additive. Given that RSD implements greedy round-robin algorithm under additive valuations, the output allocation $A$ is EF1.

\noindent \underline{\textbf{(i) Case $m \leq n$}}: under this case, the while loop in Algorithm \ref{algo:rsd} runs for $m$-iterations, given $m \leq n$, each agent only gets one chance to participate and select the item $x$. The ordering chosen by the algorithm is independent of the agent valuations hence cannot be manipulated. From the algorithm we know that given the remaining goods $R$,
$$ A_i = x \in \underset{j \in R}{argmax} \ v_i(j)  $$
If the agent misreports s.t. $y \in  \underset{j \in R}{argmax} \ v'_i(j)$ and $x \neq y$. The agent receives $y$ s.t. under true valuations, $ v_i(y) \leq v_i(x)$ and hence cannot strictly increase its utility.

\noindent \underline{\textbf{(ii) Case} $v$ is Identical (Additive)}: Let $v$ be the truthful report, we assume $v^1 \geq v^2 \geq \ldots \geq v^m$ be the value all the agents have for the $m$ goods in decreasing order. The Algorithm \ref{algo:rsd} will continue for $m$ rounds and assign the goods in this order itself. The goods remaining at round $j$ is given by $ R_j = \{ v^j, \ldots, v^m\}$.
Let us assume an agent $i$ gets allocated $k$ items before the algorithm terminates, then it selects from the following subsets and receives the items it values the most in each of these,
 $$ \{ R_i, R_{i+n},\ldots, R_{i+kn} \}$$
 Hence $ A_i = \{v^i, v^{i+n}, \ldots, v^{i+kn}\}$.
If the agent $i$ misreports and the remaining agents report truthfully, in any of the rounds w.l.o.g, $i^{th}$ round s.t., the relative ordering between the items changes, then the agent might face the two possible sets in the next round ($i+n$),
\begin{itemize}[leftmargin=*]
    \item Misreport s.t. agent $i$ gets item $p$ instead of $i$, $p \leq n+i-1$, then the set it faces in the next rounds is $\{ R_{n+i}, \ldots, R_{i+kn} \}$. Hence the items allocated are $A'_i = \{ v^p, v^{i+n}, \ldots, v^{i + kn}\}$. It can be clearly verified that, $v_i(A_i) \geq v_i(A'_i)$, hence no incentive to misreport.
    
    \item Misreport s.t agent $i$ gets item $p$ where, $k'n + i > p \geq (k'-1)n+i$, $k'\geq 2$ then the sets $i$ faces are $$\{ R_{n+i-1} \setminus \{p\}, \ldots, R_{i + (k'-1)n-1} \setminus \{p\}, R_{i+k'n}, \ldots, R_{i+kn} \}$$ Hence the items allocated are $$A''_i = \{ v^p, v^{n+i-1}, \ldots,v^{i+(k'-1)n-1} ,v^{i+k'n}, \ldots,v^{i+kn}\} $$
    Using the fact that $p \geq i+(k' -1)n$, we know that $v^{i +(k'-1)n } \geq v^p$. Hence we compare the sets $A_i$ and $A''_i$ ($\succeq$ represents element-wise comparison) as follows to obtain $v_i(A_i) \geq v_i(A''_i)$, 
    \begin{equation*}
        \begin{split}
            \{v^i, v^{i+n}, &\ldots, v^{i +(k'-2)n }, v^{i +(k'-1)n }, v^{i + k'n},\ldots, v^{i + kn}\} \succeq \\
            & \{v^{i+n-1},v^{i+2n-1},\ldots, v^{i +(k'-1)n},v^p,v^{i+k'n},\ldots, v^{i+kn}\}\\
        \end{split}
    \end{equation*}
\end{itemize}
This completes the proof for truthfulness for RSD under additive and identical valuations.

\section{Single-Minded Agents}

In this section, we restrict to a simpler valuation profile. We assume the agents are (SM) single minded. SM agents are only interested in a single bundle of goods $D$. Upon receiving the specific bundle or any super-set they get a positive utility and zero value for any other bundle (Formally given by Equation \ref{eq:sm_Def}). The problem instance is denoted by $\langle N, M, D \rangle$.


\begin{obs}
When all the agents are SM, any allocation is MMS and EF1. The $\mu_i$ in Definition \ref{def:mms} is 0 in this setting when $n>1$. Hence, allocating all the goods to one agent is also MMS. Similarly, all possible allocations satisfy EF1. If an agent $i$ receives its desired bundle or super-set then it doesn't envy any agent. If an agent $j$ receives $D_i$, then removing any item $x \in D_i$ would remove envy. If no agent receives $D_i$ as a whole, there is no envy. 
\label{obs:f}
\end{obs}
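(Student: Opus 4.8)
The plan is to split the statement into its two independent claims---that every allocation is MMS and that every allocation is EF1---and dispatch each by directly unwinding the single-minded valuation in Equation \ref{eq:sm_Def}. Throughout I would exploit that a single-minded agent's value is two-valued: $v_i(S) \in \{0, c\}$, with $v_i(S) = c$ precisely when $S \supseteq D_i$. I would also make explicit the standard non-degeneracy assumption $D_i \neq \emptyset$; if $D_i = \emptyset$ the agent values every bundle at $c$ and both claims are immediate, so that case can be set aside first.

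For the MMS claim I would first compute $\mu_i$ explicitly from Definition \ref{def:mms}. Fix an agent $i$ and any partition $A \in \Pi_n(M)$. Since $D_i$ is a fixed nonempty set and the parts of $A$ are pairwise disjoint, at most one part can contain $D_i$ as a subset; because $n > 1$, at least one part $A_j$ satisfies $A_j \not\supseteq D_i$, so $v_i(A_j) = 0$. Hence $\min_{A_j \in A} v_i(A_j) = 0$ for \emph{every} partition, and taking the maximum over partitions gives $\mu_i = 0$. Since valuations are nonnegative, every allocation satisfies $v_i(A_i) \geq 0 = \mu_i$ for all $i$, so every allocation---including the one assigning all goods to a single agent---is MMS.

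For the EF1 claim I would fix an ordered pair $(i,j)$ and argue by cases on whether $i$ is satisfied by its own bundle. If $A_i \supseteq D_i$ then $v_i(A_i) = c$ dominates every attainable value, so $i$ does not envy $j$ at all and EF1 holds trivially (any $a \in A_j$ works, or the condition is vacuous if $A_j = \emptyset$). Otherwise $v_i(A_i) = 0$, and the only way $i$ can envy $j$ is $A_j \supseteq D_i$, giving $v_i(A_j) = c$; in that case I would pick any $a \in D_i \subseteq A_j$, so that $A_j \setminus \{a\} \not\supseteq D_i$ and thus $v_i(A_j \setminus \{a\}) = 0 = v_i(A_i)$, verifying the EF1 inequality. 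If instead $A_j \not\supseteq D_i$, then $v_i(A_j) = 0 = v_i(A_i)$ and there is no envy to remove. This exhausts all cases, so EF1 holds for every pair and hence for every allocation.

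There is no deep obstacle here; the result is essentially definitional once the two-valued structure of $v_i$ is used. The only points requiring care are the boundary conditions, and these are where I would concentrate the write-up: the identity $\mu_i = 0$ genuinely needs both $n > 1$ and $D_i \neq \emptyset$ (the degenerate $D_i = \emptyset$ case must be flagged separately), and the EF1 argument must account for $A_j = \emptyset$, where the witness good does not exist but no envy arises either. Making these quantifier and emptiness edge cases explicit is the main thing I would be careful about.
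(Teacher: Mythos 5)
Your proof is correct and takes essentially the same route as the paper: you compute $\mu_i = 0$ for $n > 1$ (at most one part of any partition can contain $D_i$) to get MMS for free, and your three-way case analysis for EF1 --- $A_i \supseteq D_i$ gives no envy, $A_j \supseteq D_i$ is fixed by removing any $a \in D_i$, and otherwise there is no envy --- is exactly the paper's argument. The only difference is that you make the degenerate cases ($D_i = \emptyset$, $A_j = \emptyset$) explicit, which the paper leaves implicit.
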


Based on the observation, we extend greedy round-robin algorithm to design a SD \emph{(Serial Dictatorship)} mechanism is SP since it is also a picking sequence. SD trivially satisfies EF1 and MMS and we prove that it also satisfies EFX.

\begin{algorithm}[!t]

 \SetKwData{Left}{left}\SetKwData{This}{this}\SetKwData{Up}{up}
 \SetKwFunction{Union}{Union}\SetKwFunction{FindCompress}{FindCompress}
 \SetKwInOut{Input}{Input}\SetKwInOut{Output}{Output}

 \Input{ $\langle N, M, D \rangle$, $D = (D_1, D_2, \ldots, D_n)$}
 \Output{$(A_1, A_2, \ldots, A_n)$}
 \BlankLine

 Order the agents s.t. $|D_1|\leq |D_2| \leq \ldots \leq |D_n| $  (Ties broken arbitrarily)\;
 $i =0$ (agent number)\;
 $R = M$ (goods remaining after each iteration) \;
 \While{$R \neq \phi$}{
  Let $D_i$ be the preferred set for the current agent $i$\;
  \eIf{$D_i \subseteq R$}{
   $A_i = D_i$\;
   $R = R \setminus D_i$, $i=i+1$ \;
   }{
   \eIf{$i < n$}{
   $i = i+1$;
   }{
   $A_i = R$\;
   $R = \phi$\;
   }
  }
 }
 \caption{Serial Dictatorship Mechanism (SD)}\label{algo:sd}
\end{algorithm}
We prove certain desirable properties of the Algorithm \ref{algo:sd}, which is a modified version of greedy round-robin algorithm.
\begin{theorem}
 The Serial Dictatorship Mechanism is strategy-proof (SP) and also satisfies EF1, MMS and EFX when the agents are single-minded.
\end{theorem}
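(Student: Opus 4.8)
The plan is to handle the three fairness guarantees quickly using the single-minded structure, then concentrate the effort on strategy-proofness. First, EF1 and MMS require no new work: by Observation~\ref{obs:f} every complete allocation is EF1 and MMS for single-minded agents, and Algorithm~\ref{algo:sd} outputs a complete allocation, so nothing beyond the observation is needed.

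For EFX I would reduce the definition to a structural statement. Because each $v_i$ takes only the values $c$ and $0$, an agent $i$ with $v_i(A_i)=c$ can never witness a violation, so only agents with $v_i(A_i)=0$ matter; for such an agent $v_i(A_j\setminus\{a\})>0$ requires $A_j\setminus\{a\}\supseteq D_i$, i.e. some allocated bundle $A_j$ strictly contains $D_i$. Thus EFX reduces to: if $v_i(A_i)=0$ then no allocated bundle is a strict superset of $D_i$. I would then exploit two facts about SD. (i) Every agent except the last in the processing order receives \emph{exactly} its reported set $D_j$, so only the final (sweep) agent can hold a proper superset, namely the leftover pool $R$. (ii) Agents are processed in nondecreasing order of $|D_j|$ and $R$ only shrinks; writing $R_i$ for the pool at $i$'s turn, $i$ before $j$ gives $R_j\subseteq R_i$. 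If some non-final $j$ held $D_j\supsetneq D_i$ then $|D_j|>|D_i|$, so $j$ is processed after $i$ and $D_i\subseteq D_j\subseteq R_j\subseteq R_i$; but $D_i\subseteq R_i$ would have awarded $D_i$ to $i$, contradicting $v_i(A_i)=0$. And $v_i(A_i)=0$ forces some $x\in D_i$ to have been consumed by an earlier agent, so $x\notin R$ at the end and the sweep bundle cannot contain $D_i$ either. This settles EFX.

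The real difficulty is strategy-proofness, because a reported set $D_i'$ fixes both what agent $i$ is allocated and its rank in the size ordering. Since $v_i\in\{0,c\}$, truthfulness can fail only when the truthful report yields $0$ but some misreport yields $c$; and to reach value $c$ the agent must end up with a set containing its true $D_i$, which---being single-minded on the reported set---happens either by being allocated a reported $D_i'\supseteq D_i$, or by becoming the sweep agent and inheriting a leftover $\supseteq D_i$. I would first prove a monotonicity lemma for the allocation route: reporting a smaller set moves the agent no later, hence into a weakly larger pool, so the truthful $D_i$ is the easiest admissible set to secure. Formally, if reporting $D_i'\supseteq D_i$ gives $D_i'\subseteq R_i^{(D_i')}$, then truthful reporting gives $D_i\subseteq D_i'\subseteq R_i^{(D_i')}\subseteq R_i^{(D_i)}$, so $i$ still obtains a set containing $D_i$.

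Finally I would argue that a truthful value of $0$ cannot be improved. A truthful $0$ means a critical item $x\in D_i$ is taken by some agent $k$ processed before $i$, and $k$'s pick is independent of $i$'s report. Any misreport keeping $i$ after $k$ leaves $x$ unavailable, so $i$ can neither assemble $D_i$ through the allocation route nor recover it as the sweep agent (the leftover still omits $x$); any misreport moving $i$ before $k$ needs $|D_i'|<|D_i|$, whence $D_i'\not\supseteq D_i$ and the allocated reported set has true value $0$. The crux---and the step I expect to be most delicate---is the bookkeeping of how the agent's own reported size reshuffles the order, together with verifying that the blocking agent $k$ still takes $x$ and that promoting $i$ to the sweep role does not hand $x$ back; assembling these order/availability arguments cleanly is the heart of the proof.
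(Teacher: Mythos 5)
Your proposal is correct and, at the level of structure, mirrors the paper's proof: EF1 and MMS are dispatched by Observation~\ref{obs:f}, EFX by analyzing who could hold a superset of $D_i$ using the cardinality ordering and the shrinking pool, and SP by showing that reshuffling one's position in the ordering cannot help. Where your write-up genuinely improves on the paper is the SP argument: the paper's version is informal --- it asserts that reporting a smaller bundle yields an allocation the agent does not desire, that a larger report makes receiving any allocation less ``probable'' (loose language for a deterministic mechanism), and then concludes the ordering is unmanipulable. Your blocking-item argument makes this precise: a truthful value of $0$ pins down an item $x\in D_i$ consumed by an agent $k$ processed before $i$, whose allocation is independent of $i$'s report; any misreport keeping $i$ after $k$ leaves $x$ unavailable on both routes to positive utility (being allocated the reported set, or sweeping the leftovers), and any misreport jumping ahead of $k$ forces $|D_i'|<|D_i|$, hence a bundle that cannot contain $D_i$ and has true value $0$. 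Your EFX reduction (a violation occurs iff some allocated bundle strictly contains $D_i$ for an agent with $v_i(A_i)=0$) is likewise a clean repackaging of the paper's three-case lucky/unlucky analysis, and it correctly handles the sweep agent via $x\notin R$. One caveat, shared with the paper: both arguments implicitly require the tie-breaking in the cardinality ordering to be report-independent (e.g., by agent index); otherwise an agent reporting a different set with $|D_i'|=|D_i|$ could in principle alter its position, which neither proof addresses.
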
    
\begin{proof}
In the Algorithm \ref{algo:sd}, the while loop can run for a maximum of $n$ rounds. This means each agent $i$ has only one round in which it can be allocated the preferred bundle $D_i$. The ordering is according to the increasing cardinality of $D_i$. An agent can manipulate the ordering by reporting its desired bundle as  $D'_i$ s.t., $|D'_i| < |D_i|$. This means the agent will be allocated if at all a bundle that it does not desire. If  $|D'_i| > |D_i|$ then the probability that the agent gets any allocation is strictly less than when it reports truthfully. Hence the agent does not have any incentive to manipulate the ordering.

Given that the agent cannot manipulate the ordering. At any round, it is optimal for the agent to report truthfully the desired set $D_i$. Now we prove that the allocation $A$ obtained from SD satisfy the following fairness criteria,
\begin{itemize}[leftmargin=*]
    \item (EF1 and MMS). This is trivially true due to  Observation \ref{obs:f} which states that any allocation is EF1 and MMS when we have SM agents.
    \item (EFX). Let us assume $k$ agents, denoted by $L$ (lucky), are allocated their desired sets hence do not have any envy. If $ k \neq n$, then $n-k$ agents, denoted by $U$ (unlucky), did not receive their desired subset. From the algorithm we know that for any agent $i \in U$, $D_i \not\subseteq R_i$ where $R_i$ is the set of goods at the beginning of $i^{th}$ round.   
    \begin{itemize}
        \item $\forall i, j \in U$, $i$ does not envy $j$, because $j$ is allocated empty bundle unless $j$ is the agent appearing at the last $n$ and receives the items remaining. In this case since agent $i$ is given the chance to chose before $j$ which clearly shows it cannot envy $j$.
        \item $\forall i \in U$, $\forall \bar{i} \in L$, if $\bar{i} < i$, then $|D_{\bar{i}}| < |D_{i}|$, hence agent $i$ cannot envy $\bar{i}$. If $|D_{\bar{i}}| = |D_{i}|$ then removing any item from the bundle of $\bar{i}$ will remove envy. Hence it still satisfies EFX.
        \item $\forall i \in U$, $\forall \bar{i} \in L$, if $\bar{i} > i$, then $D_{\bar{i}} \subseteq R_i $ hence $D_{\bar{i}} \neq D_{i}$, hence the agent $i$ does not envy $\bar{i}$
    \end{itemize}
    Hence the allocation is EFX.
\end{itemize}
This concludes the proof for the theorem. Hence SD is SP and provides allocations that satisfy EF1, MMS and EFX.

\noindent\emph{Note (Proportionality). } When the agents are SM, proportional allocation exists when the following is true,
$$ v_i(A_i) \geq \frac{1}{n} v_i(D_i) > 0, \ \forall i \in N$$
The above is true only when all the agents get their desired bundle. If such a solution exists then it easily found by the SD.
\end{proof}

\section{Future Work and Conclusion}
In the literature, there are many algorithms for finding fair division of resources. Yet such algorithms may not be really fair, if one agent can manipulate it by misreporting its value to obtain higher utility. We show that greedy round-robin and cycle-elimination algorithms are manipulable. In general, we study the possibility of having strategy-proof, deterministic mechanisms without money which ensure various criteria of fairness like EF, proportionality, EFX, EF1, MMS. It is known that, such a mechanism does not exist for EF, proportionality and MMS under additive valuations. It also does not exist for EF1 under additive valuations when the number of items are more than $5$. We prove that it does not exist for EFX even when the number of items are $4$. 

Given these impossibility results, we look into settings where agents have simpler valuation type like single minded bidders. Under this assumption we provide a strategy-proof algorithm SD. SD satisfies all fairness criteria except EF. RSD satisfies EF1. The results are summarized in Table \ref{tab:mvp}. For future work, it would be interesting to look into mechanisms to settle the unfinished components in the table. Given the impossibility for general valuations, it would be interesting to design mechanisms for more specific valuation types for e.g., (general) identical etc.

\bibliographystyle{ACM-Reference-Format}
\bibliography{acmart}


\begin{thebibliography}{24}


\ifx \showCODEN    \undefined \def \showCODEN     #1{\unskip}     \fi
\ifx \showDOI      \undefined \def \showDOI       #1{#1}\fi
\ifx \showISBNx    \undefined \def \showISBNx     #1{\unskip}     \fi
\ifx \showISBNxiii \undefined \def \showISBNxiii  #1{\unskip}     \fi
\ifx \showISSN     \undefined \def \showISSN      #1{\unskip}     \fi
\ifx \showLCCN     \undefined \def \showLCCN      #1{\unskip}     \fi
\ifx \shownote     \undefined \def \shownote      #1{#1}          \fi
\ifx \showarticletitle \undefined \def \showarticletitle #1{#1}   \fi
\ifx \showURL      \undefined \def \showURL       {\relax}        \fi
\providecommand\bibfield[2]{#2}
\providecommand\bibinfo[2]{#2}
\providecommand\natexlab[1]{#1}
\providecommand\showeprint[2][]{arXiv:#2}

\bibitem[\protect\citeauthoryear{Amanatidis, Birmpas, Christodoulou, and
  Markakis}{Amanatidis et~al\mbox{.}}{2017}]%
        {amanatidis2017}
\bibfield{author}{\bibinfo{person}{Georgios Amanatidis},
  \bibinfo{person}{Georgios Birmpas}, \bibinfo{person}{George Christodoulou},
  {and} \bibinfo{person}{Evangelos Markakis}.} \bibinfo{year}{2017}\natexlab{}.
\newblock \showarticletitle{Truthful allocation mechanisms without payments:
  Characterization and implications on fairness}. In
  \bibinfo{booktitle}{\emph{Proceedings of the 2017 ACM Conference on Economics
  and Computation}}. \bibinfo{pages}{545--562}.
\newblock


\bibitem[\protect\citeauthoryear{Amanatidis, Birmpas, and Markakis}{Amanatidis
  et~al\mbox{.}}{2016}]%
        {amanatidis2016}
\bibfield{author}{\bibinfo{person}{Georgios Amanatidis},
  \bibinfo{person}{Georgios Birmpas}, {and} \bibinfo{person}{Evangelos
  Markakis}.} \bibinfo{year}{2016}\natexlab{}.
\newblock \showarticletitle{On truthful mechanisms for maximin share
  allocations}.
\newblock \bibinfo{journal}{\emph{arXiv preprint arXiv:1605.04026}}
  (\bibinfo{year}{2016}).
\newblock


\bibitem[\protect\citeauthoryear{Bei, Chen, Huzhang, Tao, and Wu}{Bei
  et~al\mbox{.}}{2017}]%
        {bei17}
\bibfield{author}{\bibinfo{person}{Xiaohui Bei}, \bibinfo{person}{Ning Chen},
  \bibinfo{person}{Guangda Huzhang}, \bibinfo{person}{Biaoshuai Tao}, {and}
  \bibinfo{person}{Jiajun Wu}.} \bibinfo{year}{2017}\natexlab{}.
\newblock \showarticletitle{Cake Cutting: Envy and Truth}. In
  \bibinfo{booktitle}{\emph{Proceedings of the Twenty-Sixth International Joint
  Conference on Artificial Intelligence, {IJCAI-17}}}.
  \bibinfo{pages}{3625--3631}.
\newblock
\urldef\tempurl%
\url{https://doi.org/10.24963/ijcai.2017/507}
\showDOI{\tempurl}


\bibitem[\protect\citeauthoryear{Bouveret and Lang}{Bouveret and Lang}{2011}]%
        {lang11}
\bibfield{author}{\bibinfo{person}{Sylvain Bouveret} {and}
  \bibinfo{person}{J\'{e}r\^{o}me Lang}.} \bibinfo{year}{2011}\natexlab{}.
\newblock \showarticletitle{A General Elicitation-Free Protocol for Allocating
  Indivisible Goods} \emph{(\bibinfo{series}{IJCAI'11})}.
  \bibinfo{publisher}{AAAI Press}, \bibinfo{pages}{73–78}.
\newblock
\showISBNx{9781577355137}


\bibitem[\protect\citeauthoryear{Bouveret and Lema{\^\i}tre}{Bouveret and
  Lema{\^\i}tre}{2016}]%
        {bouveret16}
\bibfield{author}{\bibinfo{person}{Sylvain Bouveret} {and}
  \bibinfo{person}{Michel Lema{\^\i}tre}.} \bibinfo{year}{2016}\natexlab{}.
\newblock \showarticletitle{Characterizing conflicts in fair division of
  indivisible goods using a scale of criteria}.
\newblock \bibinfo{journal}{\emph{Autonomous Agents and Multi-Agent Systems}}
  \bibinfo{volume}{30}, \bibinfo{number}{2} (\bibinfo{year}{2016}),
  \bibinfo{pages}{259--290}.
\newblock


\bibitem[\protect\citeauthoryear{Br{\^a}nzei, Gkatzelis, and Mehta}{Br{\^a}nzei
  et~al\mbox{.}}{2017}]%
        {branzei2017}
\bibfield{author}{\bibinfo{person}{Simina Br{\^a}nzei},
  \bibinfo{person}{Vasilis Gkatzelis}, {and} \bibinfo{person}{Ruta Mehta}.}
  \bibinfo{year}{2017}\natexlab{}.
\newblock \showarticletitle{Nash social welfare approximation for strategic
  agents}. In \bibinfo{booktitle}{\emph{Proceedings of the 2017 ACM Conference
  on Economics and Computation}}. \bibinfo{pages}{611--628}.
\newblock


\bibitem[\protect\citeauthoryear{Budish}{Budish}{2011}]%
        {budish2011}
\bibfield{author}{\bibinfo{person}{Eric Budish}.}
  \bibinfo{year}{2011}\natexlab{}.
\newblock \showarticletitle{The combinatorial assignment problem: Approximate
  competitive equilibrium from equal incomes}.
\newblock \bibinfo{journal}{\emph{Journal of Political Economy}}
  \bibinfo{volume}{119}, \bibinfo{number}{6} (\bibinfo{year}{2011}),
  \bibinfo{pages}{1061--1103}.
\newblock


\bibitem[\protect\citeauthoryear{Caragiannis, Kurokawa, Moulin, Procaccia,
  Shah, and Wang}{Caragiannis et~al\mbox{.}}{2016}]%
        {cara16}
\bibfield{author}{\bibinfo{person}{Ioannis Caragiannis}, \bibinfo{person}{David
  Kurokawa}, \bibinfo{person}{Herv\'{e} Moulin}, \bibinfo{person}{Ariel~D.
  Procaccia}, \bibinfo{person}{Nisarg Shah}, {and} \bibinfo{person}{Junxing
  Wang}.} \bibinfo{year}{2016}\natexlab{}.
\newblock \showarticletitle{The Unreasonable Fairness of Maximum Nash Welfare}
  \emph{(\bibinfo{series}{EC '16})}. \bibinfo{publisher}{Association for
  Computing Machinery}, \bibinfo{address}{New York, NY, USA}.
\newblock
\showISBNx{9781450339360}
\urldef\tempurl%
\url{https://doi.org/10.1145/2940716.2940726}
\showDOI{\tempurl}


\bibitem[\protect\citeauthoryear{Caragiannis, Kurokawa, Moulin, Procaccia,
  Shah, and Wang}{Caragiannis et~al\mbox{.}}{2019}]%
        {caragiannis2019}
\bibfield{author}{\bibinfo{person}{Ioannis Caragiannis}, \bibinfo{person}{David
  Kurokawa}, \bibinfo{person}{Herv{\'e} Moulin}, \bibinfo{person}{Ariel~D
  Procaccia}, \bibinfo{person}{Nisarg Shah}, {and} \bibinfo{person}{Junxing
  Wang}.} \bibinfo{year}{2019}\natexlab{}.
\newblock \showarticletitle{The unreasonable fairness of maximum Nash welfare}.
\newblock \bibinfo{journal}{\emph{ACM Transactions on Economics and Computation
  (TEAC)}} \bibinfo{volume}{7}, \bibinfo{number}{3} (\bibinfo{year}{2019}),
  \bibinfo{pages}{1--32}.
\newblock


\bibitem[\protect\citeauthoryear{Chaudhury, Garg, and Mehlhorn}{Chaudhury
  et~al\mbox{.}}{2020}]%
        {efx3}
\bibfield{author}{\bibinfo{person}{Bhaskar~Ray Chaudhury},
  \bibinfo{person}{Jugal Garg}, {and} \bibinfo{person}{Kurt Mehlhorn}.}
  \bibinfo{year}{2020}\natexlab{}.
\newblock \showarticletitle{EFX Exists for Three Agents}. In
  \bibinfo{booktitle}{\emph{Proceedings of the 21st ACM Conference on Economics
  and Computation}} (Virtual Event, Hungary) \emph{(\bibinfo{series}{EC '20})}.
  \bibinfo{publisher}{Association for Computing Machinery},
  \bibinfo{address}{New York, NY, USA}, \bibinfo{pages}{1–19}.
\newblock
\showISBNx{9781450379755}
\urldef\tempurl%
\url{https://doi.org/10.1145/3391403.3399511}
\showDOI{\tempurl}


\bibitem[\protect\citeauthoryear{Chen, Lai, Parkes, and Procaccia}{Chen
  et~al\mbox{.}}{2013}]%
        {chen2013}
\bibfield{author}{\bibinfo{person}{Yiling Chen}, \bibinfo{person}{John~K Lai},
  \bibinfo{person}{David~C Parkes}, {and} \bibinfo{person}{Ariel~D Procaccia}.}
  \bibinfo{year}{2013}\natexlab{}.
\newblock \showarticletitle{Truth, justice, and cake cutting}.
\newblock \bibinfo{journal}{\emph{Games and Economic Behavior}}
  \bibinfo{volume}{77}, \bibinfo{number}{1} (\bibinfo{year}{2013}),
  \bibinfo{pages}{284--297}.
\newblock


\bibitem[\protect\citeauthoryear{Cohen, Feldman, Fiat, Kaplan, and
  Olonetsky}{Cohen et~al\mbox{.}}{2011}]%
        {edith11}
\bibfield{author}{\bibinfo{person}{Edith Cohen}, \bibinfo{person}{Michal
  Feldman}, \bibinfo{person}{Amos Fiat}, \bibinfo{person}{Haim Kaplan}, {and}
  \bibinfo{person}{Svetlana Olonetsky}.} \bibinfo{year}{2011}\natexlab{}.
\newblock \showarticletitle{Truth, Envy, and Truthful Market Clearing Bundle
  Pricing}. In \bibinfo{booktitle}{\emph{Internet and Network Economics}},
  \bibfield{editor}{\bibinfo{person}{Ning Chen}, \bibinfo{person}{Edith
  Elkind}, {and} \bibinfo{person}{Elias Koutsoupias}} (Eds.).
  \bibinfo{publisher}{Springer Berlin Heidelberg}, \bibinfo{address}{Berlin,
  Heidelberg}, \bibinfo{pages}{97--108}.
\newblock
\showISBNx{978-3-642-25510-6}


\bibitem[\protect\citeauthoryear{Cole, Gkatzelis, and Goel}{Cole
  et~al\mbox{.}}{2013}]%
        {cole2013}
\bibfield{author}{\bibinfo{person}{Richard Cole}, \bibinfo{person}{Vasilis
  Gkatzelis}, {and} \bibinfo{person}{Gagan Goel}.}
  \bibinfo{year}{2013}\natexlab{}.
\newblock \showarticletitle{Mechanism design for fair division: allocating
  divisible items without payments}. In \bibinfo{booktitle}{\emph{Proceedings
  of the fourteenth ACM conference on Electronic commerce}}.
  \bibinfo{pages}{251--268}.
\newblock


\bibitem[\protect\citeauthoryear{Foley}{Foley}{1967}]%
        {foley1967}
\bibfield{author}{\bibinfo{person}{Duncan~Karl Foley}.}
  \bibinfo{year}{1967}\natexlab{}.
\newblock \showarticletitle{Resource allocation and the public sector.}
\newblock  (\bibinfo{year}{1967}).
\newblock


\bibitem[\protect\citeauthoryear{Goldberg and Hartline}{Goldberg and
  Hartline}{2003}]%
        {hartline03}
\bibfield{author}{\bibinfo{person}{Andrew~V. Goldberg} {and}
  \bibinfo{person}{Jason~D. Hartline}.} \bibinfo{year}{2003}\natexlab{}.
\newblock \showarticletitle{Envy-Free Auctions for Digital Goods}. In
  \bibinfo{booktitle}{\emph{Proceedings of the 4th ACM Conference on Electronic
  Commerce}} (San Diego, CA, USA) \emph{(\bibinfo{series}{EC '03})}.
  \bibinfo{publisher}{Association for Computing Machinery},
  \bibinfo{address}{New York, NY, USA}, \bibinfo{pages}{29–35}.
\newblock
\showISBNx{158113679X}
\urldef\tempurl%
\url{https://doi.org/10.1145/779928.779932}
\showDOI{\tempurl}


\bibitem[\protect\citeauthoryear{Lipton, Markakis, Mossel, and Saberi}{Lipton
  et~al\mbox{.}}{2004}]%
        {lipton2004}
\bibfield{author}{\bibinfo{person}{Richard~J Lipton},
  \bibinfo{person}{Evangelos Markakis}, \bibinfo{person}{Elchanan Mossel},
  {and} \bibinfo{person}{Amin Saberi}.} \bibinfo{year}{2004}\natexlab{}.
\newblock \showarticletitle{On approximately fair allocations of indivisible
  goods}. In \bibinfo{booktitle}{\emph{Proceedings of the 5th ACM conference on
  Electronic commerce}}. \bibinfo{pages}{125--131}.
\newblock


\bibitem[\protect\citeauthoryear{McMillan}{McMillan}{1994}]%
        {mcmillan1994selling}
\bibfield{author}{\bibinfo{person}{John McMillan}.}
  \bibinfo{year}{1994}\natexlab{}.
\newblock \showarticletitle{Selling spectrum rights}.
\newblock \bibinfo{journal}{\emph{Journal of Economic Perspectives}}
  \bibinfo{volume}{8}, \bibinfo{number}{3} (\bibinfo{year}{1994}),
  \bibinfo{pages}{145--162}.
\newblock


\bibitem[\protect\citeauthoryear{Menon and Larson}{Menon and Larson}{2017}]%
        {menon17}
\bibfield{author}{\bibinfo{person}{Vijay Menon} {and} \bibinfo{person}{Kate
  Larson}.} \bibinfo{year}{2017}\natexlab{}.
\newblock \showarticletitle{Deterministic, Strategyproof, and Fair Cake
  Cutting}. In \bibinfo{booktitle}{\emph{Proceedings of the Twenty-Sixth
  International Joint Conference on Artificial Intelligence, {IJCAI-17}}}.
  \bibinfo{pages}{352--358}.
\newblock
\urldef\tempurl%
\url{https://doi.org/10.24963/ijcai.2017/50}
\showDOI{\tempurl}


\bibitem[\protect\citeauthoryear{Mossel and Tamuz}{Mossel and Tamuz}{2010}]%
        {mossel2010}
\bibfield{author}{\bibinfo{person}{Elchanan Mossel} {and} \bibinfo{person}{Omer
  Tamuz}.} \bibinfo{year}{2010}\natexlab{}.
\newblock \showarticletitle{Truthful fair division}. In
  \bibinfo{booktitle}{\emph{International Symposium on Algorithmic Game
  Theory}}. Springer, \bibinfo{pages}{288--299}.
\newblock


\bibitem[\protect\citeauthoryear{Procaccia and Wang}{Procaccia and
  Wang}{2014}]%
        {procaccia2014}
\bibfield{author}{\bibinfo{person}{Ariel~D Procaccia} {and}
  \bibinfo{person}{Junxing Wang}.} \bibinfo{year}{2014}\natexlab{}.
\newblock \showarticletitle{Fair enough: Guaranteeing approximate maximin
  shares}. In \bibinfo{booktitle}{\emph{Proceedings of the fifteenth ACM
  conference on Economics and computation}}. \bibinfo{pages}{675--692}.
\newblock


\bibitem[\protect\citeauthoryear{Rassenti, Smith, and Bulfin}{Rassenti
  et~al\mbox{.}}{1982}]%
        {rassenti1982combinatorial}
\bibfield{author}{\bibinfo{person}{Stephen~J Rassenti},
  \bibinfo{person}{Vernon~L Smith}, {and} \bibinfo{person}{Robert~L Bulfin}.}
  \bibinfo{year}{1982}\natexlab{}.
\newblock \showarticletitle{A combinatorial auction mechanism for airport time
  slot allocation}.
\newblock \bibinfo{journal}{\emph{The Bell Journal of Economics}}
  (\bibinfo{year}{1982}), \bibinfo{pages}{402--417}.
\newblock


\bibitem[\protect\citeauthoryear{Steihaus}{Steihaus}{1948}]%
        {steihaus1948}
\bibfield{author}{\bibinfo{person}{H Steihaus}.}
  \bibinfo{year}{1948}\natexlab{}.
\newblock \showarticletitle{The problem of fair division}.
\newblock \bibinfo{journal}{\emph{Econometrica}}  \bibinfo{volume}{16}
  (\bibinfo{year}{1948}), \bibinfo{pages}{101--104}.
\newblock


\bibitem[\protect\citeauthoryear{Stromquist}{Stromquist}{1980}]%
        {stromquist1980}
\bibfield{author}{\bibinfo{person}{Walter Stromquist}.}
  \bibinfo{year}{1980}\natexlab{}.
\newblock \showarticletitle{How to cut a cake fairly}.
\newblock \bibinfo{journal}{\emph{The American Mathematical Monthly}}
  \bibinfo{volume}{87}, \bibinfo{number}{8} (\bibinfo{year}{1980}),
  \bibinfo{pages}{640--644}.
\newblock


\bibitem[\protect\citeauthoryear{Tang and Zhang}{Tang and Zhang}{2015}]%
        {TangZ15}
\bibfield{author}{\bibinfo{person}{Bo Tang} {and} \bibinfo{person}{Jinshan
  Zhang}.} \bibinfo{year}{2015}\natexlab{}.
\newblock \showarticletitle{Envy-Free Sponsored Search Auctions with Budgets}.
  In \bibinfo{booktitle}{\emph{IJCAI}}. \bibinfo{pages}{653--659}.
\newblock
\urldef\tempurl%
\url{http://ijcai.org/Abstract/15/098}
\showURL{%
\tempurl}


\end{thebibliography}

\end{document}